\newtheorem{thm}{Theorem}[section]
\newtheorem{cor}[thm]{Corollary}
\newtheorem{lem}[thm]{Lemma}
\newtheorem{prop}[thm]{Proposition}
\newtheorem{defn}[thm]{Definition}
\newtheorem{rem}[thm]{\bf{Remark}}
\numberwithin{equation}{section}
\def\pn{\par\noindent}
\newcommand{\ch}{\mathrm{Char}}
\newcommand{\M}{\mathcal M}
\newcommand{\F}{{GF}}
\begin{document}
%%------------------------------------------------------------------------------------%
%%%Don not change any thing in this part
%\hskip -0.5 cm
%\begin{tabular}{c r}
%\vspace{-1.5cm}
%%\includegraphics[width=2.6cm]{TOC-logo}\\
%\href{http://www.combinatorics.ir}{\scriptsize  \rm www.combinatorics.ir}\\
%\end{tabular}
%\hskip 2 cm
%\begin{tabular}{l}
%\hline \vspace{-0.2cm}
%\scriptsize \rm\bf Transactions on Combinatorics\\
%\vspace{-0.2cm}
%\scriptsize \rm ISSN (print): 2251-8657, ISSN (on-line): 2251-8665 \\
%\vspace{-0.2cm}
%\scriptsize Vol. {\bf\rm x} No. x {\rm(}201x{\rm)}, pp. xx-xx.\\
%\scriptsize $\copyright$ 2012 University of Isfahan\\
%\hline
%\end{tabular}
%\hskip 2 cm
%\begin{tabular}{c r}
%\vspace{-0.2cm}
%%\includegraphics[width=1.9cm]{UI-logo}\\
%\href{http://www.ui.ac.ir}{\scriptsize \rm www.ui.ac.ir}\\
%\vspace{-1.2cm}
%\end{tabular}
%\vspace{1.3 cm}

%%------------------------------------------------------------------------------------%
\title{A note on full weight spectrum codes %\thanks{The author acknowledges the support of the NSERC of Canada Discovery Grant program.
}

\author{Tim L. Alderson%$^*$   %   
}

\thanks{{\scriptsize
\hskip -0.4 true cm MSC(2010): Primary: 94B05  ; Secondary: 94B65 \and  94B25.
\newline Keywords: weight spectrum, linear code, Hamming weight, FWS, MWS.\\
%$^*$The author acknowledges the support of the NSERC of Canada Discovery Grant program.\\
%Received: dd mmmm yyyy, Accepted: dd mmmm yyyy.\\
}}

\maketitle

%------------------------------------------------------------------------------------%
%This part will be filled in by TOC
\begin{center}
%Communicated by\;
\end{center}
%------------------------------------------------------------------------------------%

\begin{abstract}  A linear $ [n,k]_q $ code $ C $ is said to be a full weight spectrum (FWS) code if there exist codewords of each weight less than or equal to $ n $.  In this brief communication we determine necessary and sufficient conditions for the existence of linear $ [n,k]_q $ full weight spectrum (FWS) codes. Central to our approach is the geometric view of linear codes, whereby columns of a generator matrix correspond to points in $ PG(k-1,q) $.
\end{abstract}

\vskip 0.2 true cm

%---------------------------------------------------------------------------%---------%

\pagestyle{myheadings}
\markboth{\rightline {\sl   \hskip 8.5 cm  T. L. Alderson }}
         {\leftline{\sl   \hskip 8.5 cm  T. L. Alderson}}

\bigskip
\bigskip

%------------------------------------------------------------------------------------%
%------------------------------------------------------------------------------------%

%\section{\bf Introduction}
%\vskip 0.4 true cm

%While you are preparing your paper, please take care of the following:
%\begin{enumerate}
%\item Dates: Received: 30 April 2011, Accepted: 21 June 2012.\\
%\item Abstract: 150 words or less.\\
%\item MSC2010: Primary only one item; and Secondary at least one item.\\
%\item Key words: At least 3 items and at most 5 items.\\
%\item Authors: Full names, mailing addresses and emails of all authors (if exist).\\
%\item Tags (Formula Numbers): Use \label{A} and \eqref{A}. Remove unused tags.\\
%\item Acknowledgement: At the end of paper but preceding to References.\\
%\item Margins: A long formula should be broken into two or more lines. Empty spaces in the text should be removed.\\
%\item References: Use \cite{ABS} to refer to the specific book/paper [1] in the text. Remove unused references. References should be given in alphabetical order with the following format: a) to books – author, title, publisher, location, year of publication; b) to articles in periodicals or collections – author, title of the article, title of the periodical (collection), volume, year, pagination. Abbreviations of titles of periodicals and collections should be given following Mathematical Reviews at Abbreviations of names of serials, see www.ams.org/msnhtml/serials.pdf\\
%\end{enumerate}

%------------------------------------------------------------------------------------%

\section{\bf Introduction}
\label{intro}
\vskip 0.4 true cm

The weight spectra of codes have been investigated in many works over the years, and for different purposes. In \cite{MacWilliams1963}, MacWilliams exploited the relationship between  a code and its dual to determine the existence of a linear binary code with a given weight set. Delsarte \cite{Delsarte1973}, studied the cardinality of weight sets, utilizing them to compute the weight distributions of code cosets. Other discussions concerning the weight set and its cardinality can be found in \cite{Slepian1956,Enomoto1987}.\\
Recently,  Shi \textit{et. al.} \cite{Shi2018,Shi2018a}  studied a combinatorial problem concerning the maximum number $ L(k,q) $ of distinct weights  a  linear code of dimension $ k $ over $ GF(q) $ may realize.  Obviously  $ L(k,q)  \le \frac{q^k-1}{q-1}$. In \cite{Shi2018},  this bound is shown to be sharp for binary codes, and for all $ q $-ary codes of dimension $ k=2 $. We note that  in 2015, Haily and Harzalla \cite{Haily2015} also established the existence of binary codes meeting this bound.  Shi \textit{et. al.}  went on to conjecture that  the bound is sharp for all $ q $ and $ k $. This conjecture was proved correct in \cite{Alderson2018}.   Codes meeting this bound are called \textit{maximum weight spectrum} (MWS) codes.\\
A further refinement was also investigated in  \cite{Shi2018}
by the introduction of the function $ L(n, k, q) $, denoting the maximum number of non-zero weights an $ [n, k]_q $ code may have. They observed that an immediate upper bound is  $ L(n, k, q)\le n $. In this short communication we establish necessary and sufficient conditions for the existence of  codes meeting this bound. Such codes will be said to be \textit{full weight spectrum} (FWS) codes.\\

%\section{Section title}
%\label{sec:1}
%Text with citations \cite{RefB} and \cite{RefJ}.
%\subsection{Subsection title}
%\label{sec:2}
%as required. Don't forget to give each section
%and subsection a unique label (see Sect.~\ref{sec:1}).
%\paragraph{Paragraph headings} Use paragraph headings as needed.

\section{\bf Preliminaries}
\vskip 0.4 true cm

\subsection{Linear codes and weight sets}

A linear $ [n,k]_q $ code is a $ k $ dimensional subspace of $ GF(q)^n $, where  $GF(q)$ denote the finite field with $q$ elements. Each of the subspace vectors are called codewords.  The (Hamming)\textit{ distance} between two codewords is the number of coordinates in which they differ, whereas the (Hamming) \textit{ weight} of a codeword $c\in \F(q)^n$ is the number of non-zero coordinates of $ c $.  The minimum distance, $ d $ of a linear code is the least pairwise distance among codewords, and is equal to the least non-zero codeword weight.   
\begin{defn}
	For an $[n,k]_q$ code $C$ we define the \emph{weight set of $C$} as
	$$w(C)=\left\{w(c) \mid c \in C\setminus \{0\}\right\}.$$
\end{defn}

Given an  $[n,k]_q$ \emph{code} $C$,  a $ k\times n $  matrix $ G $ is said to be a \emph{generator matrix}   for $C$ if the row vectors of $ G $ span $C$. \\

An $[n,k]_q$ code  $ C $ of dimension $ k\ge 2 $ is said to be   \emph{non-degenerate} if no coordinate position is identically zero. 
Throughout,  by $[n,k,d]_q$ code we will denote an $[n,k]_q$ code $C$ whose minimum distance is $d$. Moreover, unless specified otherwise, all codes discussed here are assumed to be non-degenerate.

\subsection{Projective systems}

We shall find the geometric view of linear codes to be of use. This view of linear codes is detailed in \cite{MR1186841}. Let us first give a brief 
overview of some fundamentals of finite projective geometries. For a detailed introduction we refer to the recent book by Ball \cite{Ball2015}. We let $PG(k,q)$ represent the finite projective geometry of dimension $k$ and order $q$.  A result of Veblen and Young
\cite{MR0179666} shows that all finite projective spaces of order $ q $, and of fixed dimension greater than two are isomorphic. The space $PG(k,q)$ can be modelled most easily with the vector space of dimension $k+1$ over the finite field $GF(q)$.  In this model, the one-dimensional subspaces represent the
points, two-dimensional subspaces represent lines, etc.  

Using this model, it is not hard to show
by elementary counting that the number of points of $PG(k,q)$ is given by \[\theta_q(k)=\frac{q^{k+1}-1}{q-1}.\]

A \textit{$d$-flat} $\Pi$ in $PG(k,q)$ is a subspace isomorphic to $PG(d,q)$; if $d=k-1$, the subspace $\Pi$ is called a \textit{hyperplane}.   A set of $ m $ points in $ PG(k,q) $ is said to be in \textit{general position} if every subset of $ k + 1 $ points form a basis for $ PG(k,q) $. We note at this time that through any proper non-empty subset $ S $, of a basis $ B $, there exists at least one hyperplane meeting $ B $ precisely in $ S $.\\ 
In finite geometry, a set of $ m $ points in general position  is called an $ m- $\textit{arc} (or an \textit{arc} of size $ m $). The literature on arcs is rich. An $ m $-arc in $ PG(k, q) $  is equivalent to a linear $ [m, k + 1]_q $ maximum distance separable (MDS)  code. MDS codes attain the Singleton bound, and therefore possess the strongest error correction possible. The classical example of an $ q+1 $-arc in $ PG(k,q) $ is the
set of points of the normal rational curve $ (1, t, t^2,\ldots, t^n)
, (t \in GF(q) \cup \{\infty\}) $. In the case $ k=2 $ all $ (q+1) $-arcs are conics. The $ (q+2)$-arcs  are known to exist in $ PG(2,q) $ when $ q $ is even, and are called hyperovals.

Central to the geometric view of linear codes is the idea of a projective system. 

\begin{defn}
	A \emph{projective $[n,k,d]_q$-system} is a finite (multi)set $\M$ of points 
	of  $PG(k-1,q)$, not all of which lie in a hyperplane, where $n=|\M|$ , and $$n-d=\max\{ |\M\cap H| \mid H \subset PG(k-1,q), \dim(H)=k-2\}.$$ Note that the cardinalities above are counted with multiplicities in the case of a multiset. We denote by $ m(P) $ the multiplicity of the point $ P $ in $ \M $.  
\end{defn}

Let $ C $  be an $ [n,k]_q $ code with $ k\times n $ generator matrix $ G $. Note that multiplying any column of $ G $ by a non-zero field element yields a generator matrix for a code which is equivalent to $ C $. Consider the (multi)set of one-dimensional subspaces of $ GF(q)^n $ spanned by the columns of $ G $. In this way the columns may be considered as a (multi)set $ \M $ of points of $ PG(k-1,q) $.  

For any non-zero vector $ v=(v_1,v_2,\ldots,v_k) $ in $ GF(q)^k $, it follows that the projective hyperplane 
\[
v_1x_1+v_2x_2+\cdots + v_kx_k=0
\]
contains $ |\M|-w $ points of $ \M $ if and only if the codeword $ vG $ has weight $ w $.  Thus,  linear (non-degenerate)  $ [n,k,d]_q $ codes and projective $ [n,k,d]_q $  systems are equivalent objects. That is to say, there exists a linear $ [n,k,d]_q $ code if and only if there exits a projective $ [n,k,d]_q $ system.

\begin{defn}
	Let $\M$ be a multiset in $ \Pi=PG(k-1,q) $. We define the \emph{character} function of $ \M $, denoted $ \ch_{\M} $ (or $ \ch $, if $ \M $ is clear from the context), mapping the power set of $ \Pi $ to the non-negative integers:
	\[
	\ch(A)=\sum_{P\in A} m(P).
	\]
	So $\ch (A) $ is  the number, including multiplicity, of points in $ \M \cap A$.  
\end{defn}

\section{\bf Full weight spectrum codes}
\vskip 0.4 true cm

Recently in \cite{Shi2018}, Shi \textit{et. al.}  introduced the function $ L(n, k, q) $, describing the maximum number of non-zero weights an $ [n, k]_q $ code may have. Clearly,  $ L(n, k, q)\le n $, motivating the following definition.

\begin{defn}
	An $ [n,k]_q $ code is said to be a Full Weight Spectrum (FWS) code, if the cardinality of it's weight set is $ n $, that is $ |W(C)|=n $.
\end{defn}

The following summarizes some of the main results  regarding $ L(n, k, q) $.

\begin{prop}\label{prop: Si summary}
	For all prime powers $ q $, and all positive integers $ k $:
	\begin{enumerate}
		\item $  L(n,k,q)\le L(n+1,k,q) $;
		\item $  L(n,k,q)\le L(n,k+1,q) $;
		\item $ L(n,k,q)\le L(n,k,q^m) $;
		\item $ \displaystyle L(n,k,q)\le L(k,q) = \frac{q^{k}-1}{q-1} $;
		\item $ \displaystyle \lim_{n\to \infty}L(n,k,q) \le L(k,q) = \frac{q^{k}-1}{q-1} (=\theta_q(k-1))  $;
		\item If $ n\ge \frac{q(q+1)}{2} $ then $ L(n,2,q)=q+1(=\theta_q(1)) $; \label{item: 2d}
		\item If $ k\ge 2 $, and $ L(n,k,q)=\theta_q(k-1) $, then  
		\[ n\ge  \left\lceil\frac{q\cdot\theta_q(k-1)}{2}\right\rceil
		%= \left\lceil\frac{q^{k+1}-q}{2(q-1)}\right\rceil
		= \left\lceil\frac{1}{2}\left[q^k+q^{k-1}+\cdots+q  \right]\right\rceil;\]
		\item If $ k\ge 3 $, and $ n\ge q^{\frac{k^2+k-4}{2}} $, then there exists an (possibly degenerate) MWS code, so  $ L(n,k,q)=\theta_q(k-1) $. \label{item: old MWS} % note then such codes may be degenerate, obtained by appending zeros to an MWS code
	\end{enumerate}	
\end{prop}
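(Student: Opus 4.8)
The plan is to treat the eight items according to their nature. Items (1)--(6) and (8) record monotonicity and asymptotic facts together with the two sharp results (the value of $L(k,q)$ and the existence of MWS codes for large length), all of which are established in \cite{Shi2018,Shi2018a,Alderson2018}; I would dispatch these by citation, adding one-line geometric justifications where they are immediate. For instance, (4) and (5) follow from the trivial hyperplane count (a dimension-$k$ code has exactly $\theta_q(k-1)$ hyperplanes, hence at most that many distinct weights) combined with the equality $L(k,q)=\theta_q(k-1)$ proved in \cite{Alderson2018}; items (1)--(3) are the standard length-, dimension-, and field-monotonicity of the weight-spectrum function; and (8) is the existence theorem of \cite{Alderson2018}. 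The one item carrying real content is (7), and this is where I would concentrate.

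For (7) I would pass to the projective system $\M$ of $n$ points of $\Pi=PG(k-1,q)$ associated to an MWS code, and argue at the level of the character function $\ch$. The decisive observation is a coincidence of cardinalities: the number of hyperplanes of $\Pi$ equals $\theta_q(k-1)$, which is exactly the number of distinct weights the code is assumed to realize. Since every weight has the form $n-\ch(H)$ for a hyperplane $H$, and there are only $\theta_q(k-1)$ hyperplanes, the hypothesis $L(n,k,q)=\theta_q(k-1)$ forces the map $H\mapsto \ch(H)$ to be injective; that is, the values $\{\ch(H)\mid H \text{ a hyperplane}\}$ are $\theta_q(k-1)$ pairwise distinct non-negative integers.

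The remainder is a double count. Summing $\ch(H)$ over all hyperplanes and interchanging the order of summation, each point of $\Pi$ lies on exactly $\theta_q(k-2)$ hyperplanes, so $\sum_H \ch(H)=\theta_q(k-2)\,\ch(\Pi)=\theta_q(k-2)\,n$. On the other hand, a collection of $\theta_q(k-1)$ distinct non-negative integers has sum at least $0+1+\cdots+(\theta_q(k-1)-1)=\tfrac12\theta_q(k-1)\bigl(\theta_q(k-1)-1\bigr)$. Combining the two and invoking the identity $\theta_q(k-1)-1=q\,\theta_q(k-2)$ gives $\theta_q(k-2)\,n\ge \tfrac12 \theta_q(k-1)\cdot q\,\theta_q(k-2)$, whence $n\ge \tfrac{q}{2}\,\theta_q(k-1)$; taking the ceiling, since $n$ is an integer, yields the stated bound.

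I expect the main obstacle to be conceptual rather than computational: recognizing that the equality case of the trivial bound $L(n,k,q)\le\theta_q(k-1)$ is equivalent to injectivity of the character-on-hyperplanes map, which is precisely what upgrades a one-sided count into a sharp inequality. The remaining delicate points are minor, namely verifying the arithmetic identity $\theta_q(k-1)-1=q\,\theta_q(k-2)$, confirming that the extremal ``staircase'' $0,1,\dots,\theta_q(k-1)-1$ is the correct minimiser of the sum of distinct non-negative integers, and noting that non-degeneracy guarantees $\M$ has exactly $n$ points (with multiplicity) so that $\ch(\Pi)=n$. It is worth remarking that the $k=2$ instance of (7) reproduces the threshold $q(q+1)/2$ appearing in (6), so that (6) and (7) are the existence and necessity halves of the same extremal phenomenon.
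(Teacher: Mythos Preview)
Your proposal is correct and in fact strictly exceeds what the paper does: the paper's own proof of this proposition is purely citational (items 1--3 and the inequalities in 4, 5 to \cite{Shi2018}; items 6--8 and the equalities in 4, 5 to \cite{Alderson2018}), with no argument supplied in the text. Your treatment of (1)--(6) and (8) by citation with brief justification therefore matches the paper exactly, and your self-contained double-counting proof of (7) --- injectivity of $H\mapsto\ch(H)$ forced by the coincidence of cardinalities, combined with $\sum_H\ch(H)=\theta_q(k-2)\,n$ and the staircase lower bound --- is a correct and clean reconstruction of the argument the paper merely points to in \cite{Alderson2018}.
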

\begin{proof}
	For 1-3, and the inequalities in 4, 5 see \cite{Shi2018}. For 6, 7, 8, and  the equalities in 4, 5, see \cite{Alderson2018}. 
\end{proof}

Before discussing  FWS codes, we offer the following result on MWS codes, in answer to a question raised in \cite{Alderson2018}. This generalizes to higher dimensions part \ref{item: 2d}, and for $ k>3 $ improves significantly part \ref{item: old MWS} of Proposition \ref{prop: Si summary}.   

\begin{prop}\label{prop: better bound MWS}
For each $ k\ge 2  $ there exists an MWS code of length  \[n=\theta_q(k-2)\cdot  {\theta_q(k-1)\choose 2} =\frac{q(q^k-1)(q^{k-1}-1)^2}{2(q-1)^3}\]

Consequently, for $ k\ge 2 $, and $ \displaystyle  n\ge \theta_q(k-2)\cdot  {\theta_q(k-1)\choose 2}\approx \frac{q^{3k-4}}{2} $, there exists an (possible degenerate) MWS code.
\end{prop}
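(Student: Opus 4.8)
The plan is to pass to the language of projective $[n,k,d]_q$-systems and to observe that an MWS code is \emph{exactly} a multiset $\M$ in $PG(k-1,q)$ whose hyperplanes all have distinct characters. By duality the number of hyperplanes of $PG(k-1,q)$ equals the number of points, namely $N:=\theta_q(k-1)=L(k,q)$ (Proposition \ref{prop: Si summary}, item 4), and every nonzero weight has the form $n-\ch(H)$ for a hyperplane $H$, the map $H\mapsto\ch(H)$ running over all hyperplanes as $v$ runs over the nonzero vectors up to scalars. Hence $|w(C)|=|\{\,\ch(H)\mid H\ \text{a hyperplane}\,\}|$, and this attains its maximal value $N$ precisely when $H\mapsto\ch(H)$ is injective. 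So the entire problem reduces to building a spanning multiset whose $N$ hyperplane characters are pairwise distinct, while keeping $\sum_P m(P)=n$ equal to the stated value.

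The construction I would use is \emph{dual}: rather than assigning multiplicities to points, I would assign a weight $a(H)$ to each hyperplane and then induce the point multiplicities by
\[
m(P)=\sum_{H\ni P}a(H),
\]
taking $a$ to be a bijection from the $N$ hyperplanes onto $\{0,1,\dots,N-1\}$. Counting incidences (each point lies on $\theta_q(k-2)$ hyperplanes, and each hyperplane carries $\theta_q(k-2)$ points) gives at once
\[
n=\sum_P m(P)=\theta_q(k-2)\sum_H a(H)=\theta_q(k-2)\binom{N}{2},
\]
which is exactly the claimed length.

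The heart of the argument is the character computation. For a fixed hyperplane $H'$,
\[
\ch(H')=\sum_{P\in H'}\sum_{H\ni P}a(H)=\sum_H a(H)\,|H\cap H'|,
\]
where the decisive geometric fact is that $|H\cap H'|$ takes only two values: it equals $\theta_q(k-2)$ when $H=H'$, and equals $\theta_q(k-3)$ (the size of the $(k-3)$-flat $H\cap H'$) when $H\ne H'$. Writing $S=\sum_H a(H)=\binom{N}{2}$ and using the identity $\theta_q(k-2)-\theta_q(k-3)=q^{k-2}$, the sum collapses to
\[
\ch(H')=\theta_q(k-3)\,S+q^{k-2}\,a(H').
\]
Since $q^{k-2}\ne 0$, distinct weights $a(H')$ force distinct characters, so all $N$ hyperplanes have distinct characters and the code is MWS. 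A short admissibility check completes this step: for $k\ge 3$ every point lies on at least two hyperplanes, at most one of which carries weight $0$, so $m(P)>0$ for all $P$ and $\M$ spans (for $k=2$ the unique weight-$0$ point is simply dropped, leaving $q\ge 2$ spanning points and recovering Proposition \ref{prop: Si summary}, item 6); moreover $\ch(H')\le\theta_q(k-3)S+q^{k-2}(N-1)<n$, so no weight is $0$ and we genuinely obtain $N=\theta_q(k-1)$ distinct positive weights.

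I expect the single creative step to be the dualisation — recognising that weighting \emph{hyperplanes} is what makes $\ch(H')$ degenerate into an affine function of the one parameter $a(H')$; everything after that is bookkeeping with the two intersection cardinalities. The closing ``consequently'' clause then follows formally: having produced an MWS code of the stated length $n_0$, the monotonicity $L(n,k,q)\le L(n+1,k,q)$ (item 1) together with $L(n,k,q)\le\theta_q(k-1)$ (item 4) forces $L(n,k,q)=\theta_q(k-1)$ for every $n\ge n_0$; concretely, one may pad the generator matrix with $n-n_0$ zero columns, leaving the weight set unchanged and yielding a (degenerate) MWS code of length $n$.
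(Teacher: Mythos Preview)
Your proposal is correct and is essentially the same construction as the paper's: labelling the hyperplanes $H_0,\dots,H_{N-1}$ and setting $m(P)=\sum_{P\in H_i}i$ is exactly your ``assign $a(H)\in\{0,\dots,N-1\}$ bijectively and put $m(P)=\sum_{H\ni P}a(H)$'', and the paper's formula $\ch(H_t)=\theta_q(k-3)\binom{N}{2}+[\theta_q(k-2)-\theta_q(k-3)]\,t$ is your $\ch(H')=\theta_q(k-3)S+q^{k-2}a(H')$. Your write-up is in fact a bit more careful, supplying the spanning/nondegeneracy check and the verification that $\ch(H')<n$, which the paper leaves implicit.
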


\begin{proof}
Denote the hyperplanes of $ \pi=PG(k-1,q)=\{H_0,H_1,\ldots,H_{\theta_q(k-1)-1}\} $, and define the projective system $ \M $ as follows. For each point $P\in \pi$, let $ \ch_\M(P)=\sum_{P\in H_i} i $. Note that for $ k=2 $, the $ H_i $'s are equal or disjoint. For  $ k\ge 3 $, and  $ i\ne j $, $ H_i\cap H_j $ is a $ (k-3) $-flat. Consequently, for $ k\ge 2 $, and  $ 0\le t <\theta_q(k-1) $  we have 
\[ \ch_\M(H_t)=\theta_q(k-3) {\theta_q(k-1)\choose 2} +[\theta_q(k-2)-\theta_q(k-3)]t .\]
It follows that the corresponding linear code is MWS, and $ n= \theta_q(k-2)\sum\limits_{i=1}^{\theta_q(k-1)-1}i $.
\end{proof}

We now move on to our discussion of FWS codes. First, we provide a geometric view of $ [n,k]_q $ FWS codes.

\subsection{Geometric view of FWS codes}

Given an $[n,k,d]_q$ code $C$, we can consider the associated projective $[n,k,d]_q$-system $\M(C)$, whose points are given by the columns of the generator matrix. A direct consequence of the definitions is the following.

\begin{lem}
	Let $ C$ be an $[n,k]_q$ code over $\F (q)$, and let $\M:=\M( C)$ be an associated projective system. There exists a codeword of weight $ s $ in $ C $ if and only if there exists a hyperplane  $H$ in  $ \Pi $ with $\ch_\M(H)= n-s $.
\end{lem}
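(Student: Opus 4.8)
The plan is to unwind the dictionary between codewords and hyperplanes that was recorded just before the definition of the character function, and to phrase it in terms of $\ch_\M$ so that multiplicities are handled automatically. First I would fix a $k\times n$ generator matrix $G$ for $C$ whose columns $g_1,\dots,g_n$, read as points of $\Pi=PG(k-1,q)$, form the multiset $\M$. Every codeword arises as $c=vG$ for a unique $v=(v_1,\dots,v_k)\in\F(q)^k$, and the weight $w(c)$ equals the number of coordinates $j$ for which $v\cdot g_j\neq 0$. Equivalently, $n-w(c)$ counts exactly those coordinates $j$ with $v\cdot g_j=0$.

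Second, I would translate the condition $v\cdot g_j=0$ geometrically: the point of $\Pi$ represented by the column $g_j$ lies on the hyperplane $H_v$ with equation $v_1x_1+\cdots+v_kx_k=0$. Summing over the columns, counted with multiplicity, gives $\ch_\M(H_v)=n-w(c)$ directly from the definition $\ch_\M(H_v)=\sum_{P\in H_v}m(P)$, since the multiplicity $m(P)$ records the number of columns equal to $P$.

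For the forward implication I would start from a codeword $c$ of weight $s$, necessarily nonzero, so $c=vG$ with $v\neq 0$; then $H_v$ is a genuine hyperplane and the computation above yields $\ch_\M(H_v)=n-s$. For the converse I would start from a hyperplane $H$, write it as $H_v$ for some nonzero $v$ (every hyperplane of $\Pi$ has such an equation, unique up to a nonzero scalar), and read the same identity backwards to conclude that $c=vG$ has weight $s$. Since weight is invariant under scalar multiplication of $v$, the scalar ambiguity in the equation of $H$ is harmless.

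The one point that requires care, rather than a genuine obstacle, is the bookkeeping of multiplicities: when $\M$ is a multiset the equality ``number of zero coordinates of $c$'' $=$ ``$\ch_\M(H_v)$'' must be read with each repeated column contributing to the relevant $m(P)$. This is precisely what the character function is designed to encode, so once the notation is in place the statement follows at once, with no analytic or combinatorial difficulty beyond this translation.
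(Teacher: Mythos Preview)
Your argument is correct and is exactly the unwinding the paper has in mind: the paper states this lemma as ``a direct consequence of the definitions,'' relying on the paragraph that identifies $n-w(vG)$ with the number of columns lying on the hyperplane $v_1x_1+\cdots+v_kx_k=0$, which is precisely $\ch_\M(H_v)$. You have simply written out that translation in full, including the multiplicity bookkeeping, so there is nothing to add or correct.
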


A natural consequence of the above Lemma is the following characterisation of FWS codes.

\begin{lem}\label{lem:GeometricConditions}
	Let $ \mathcal{H} $ denote the collection of all hyperplanes in $ PG(k-1,q) $. There exists an $[n,k]_q$ FWS code if and only if there exists an $ [n,k,d]_q $ projective system  $\M$  such that $ \ch :{\mathcal{H}}\to \{0,1,\ldots,n-1\} $ is a surjection.   
\end{lem}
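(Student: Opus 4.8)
The plan is to derive the equivalence directly from the preceding Lemma, which already translates ``$C$ has a codeword of weight $s$'' into ``some hyperplane $H$ satisfies $\ch_\M(H)=n-s$''. Everything then reduces to rewriting the condition ``every weight $1,\ldots,n$ occurs'' in terms of the attained character values. Before starting the two directions I would record one elementary range fact: for any projective $[n,k,d]_q$ system $\M$ and any hyperplane $H$ one has $0\le \ch_\M(H)\le n-1$. The lower bound is trivial, and the upper bound holds because, by definition of a projective system, $\M$ is not contained in any hyperplane, so each $H$ misses at least one point of $\M$ and hence $\ch_\M(H)\le |\M|-1=n-1$. Thus the character values of hyperplanes always lie in $\{0,1,\ldots,n-1\}$, and the content of the Lemma is that \emph{all} of these values are actually attained.

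For the forward implication, assume $C$ is FWS and set $\M=\M(C)$. Since $w(C)\subseteq\{1,\ldots,n\}$ has at most $n$ elements and $|w(C)|=n$, in fact $w(C)=\{1,2,\ldots,n\}$. By the preceding Lemma, for each $s\in\{1,\ldots,n\}$ there is a hyperplane $H_s$ with $\ch_\M(H_s)=n-s$; as $s$ ranges over $\{1,\ldots,n\}$ the quantity $n-s$ ranges over $\{0,1,\ldots,n-1\}$, so every value in $\{0,\ldots,n-1\}$ is attained. Combined with the range fact above, the character function on hyperplanes takes precisely the values $\{0,1,\ldots,n-1\}$. For the converse I would reverse this: given a projective $[n,k,d]_q$ system $\M$ whose hyperplanes realise every value in $\{0,\ldots,n-1\}$, let $C$ be the associated code; for each $v\in\{0,\ldots,n-1\}$ a hyperplane of character $v$ yields, via the Lemma, a codeword of weight $n-v$, and as $v$ runs through $\{0,\ldots,n-1\}$ the weight $n-v$ runs through $\{1,\ldots,n\}$. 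Hence $w(C)=\{1,\ldots,n\}$ and $C$ is FWS.

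The step I expect to require the most care is the precise reading of ``$\ch$ is a bijection onto $\{0,1,\ldots,n-1\}$''. Since $PG(k-1,q)$ has $\theta_q(k-1)=\frac{q^k-1}{q-1}$ hyperplanes, which typically exceeds $n$, the assignment $H\mapsto\ch_\M(H)$ cannot be injective on the full set of hyperplanes; the faithful statement is that its image is exactly $\{0,1,\ldots,n-1\}$, i.e. $\ch_\M$ is surjective onto this set, or equivalently induces a bijection after identifying hyperplanes of equal character. I would make this explicit so that the ``bijection'' in the statement is understood as the equality of the attained character set with $\{0,\ldots,n-1\}$; with that convention the argument above is complete, the only genuine input being the preceding Lemma.
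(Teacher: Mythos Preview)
Your proof is correct and follows precisely the approach the paper itself indicates: the paper presents this lemma as ``a natural consequence of the above Lemma'' and gives no further argument, and your write-up is exactly the natural elaboration of that remark. Your careful reading of the word ``bijection'' (namely, that the intended content is that the set of hyperplane characters equals $\{0,1,\ldots,n-1\}$, since there are $\theta_q(k-1)$ hyperplanes and the map $H\mapsto\ch_\M(H)$ cannot be literally injective) is in fact a useful clarification of an imprecision in the paper's own statement.
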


\subsection{Existence of FWS codes}
In this section we investigate  the existence of  FWS codes. Since points and hyperplanes coincide in $ PG(1,q) $, the case of 2-dimensional FWS codes is  treated separately. 

\begin{lem}\label{lem: dimension 2}
	\begin{equation} \label{eqn: 2D}
	L(n,2,q)=\left\{\begin{array}{ll}
	\left\lfloor\frac{\sqrt{1+8n}-1}{2}\right\rfloor+1 & \textrm{, if }   n < \frac{q^2+q}{2}\\
	q+1 & \textrm{ otherwise.} 
	\end{array}\right.
	\end{equation} 
\end{lem} 
\begin{proof}
	Fix $ n $ and $ q $. An $ [n,2]_q $ code is determined by specifying the corresponding projective multiset $ \M $ in $ PG(1,q) =\{P_0,P_1,\ldots,P_{q}\}$. In this setting,  the number of (distinct) non-zero weights corresponds to the number of distinct point multiplicities $ m(P_i) $.  From the basic combinatorial theory, it follows that the number of distinct non-zero weights is maximized when the set of characters is $ \{0,1,2,3,\ldots,t\} $, where $ t$ is chosen maximally. Such a value $ t $ is the maximal integer solution to $ \frac{x(x+1)}{2}\le n $. The top inequality in (\ref{eqn: 2D}) follows since $ |W(C)|=t+1 $.  The bottom inequality follows from the cardinality of $ PG(1,q) $.
\end{proof}

\begin{lem} \label{lem:FWS 2D}
	An $ [n,2]_q $ FWS code exists if and only if $ n\le3$.
\end{lem}

\begin{proof}
Suppose $ C $ is an $ [n,2]_q $ FWS code, so that $ |W(C)|=n $, and consider the set $ \{m_1,m_2,\ldots,m_n\} $ of distinct multiplicities of the corresponding multiset in $ \ell=PG(1,q)=\{P_0,P_1,P_1,\ldots,P_q\} $. It holds that
\[
n\ge \sum\limits_{i=1}^{n}m_i  \ge \sum\limits_{i=0}^{n-1}i =\frac{n(n-1)}{2},
\]
giving $ n\le 3 $. 
For the other direction, observe that the projective systems $ \M_1=\{P_1,P_2,P_2\} $, and $ \M_2=\{P_1,P_2\} $ correspond to $ [n,2]_q $ FWS codes of length $ 3 $, and $ 2 $ respectively.
\end{proof}

\begin{lem} \label{lem:FWS if short}
	Let $ k\ge 3 $.	If $ n<2^{k} $ then $ L(n,k,q)=n $
\end{lem}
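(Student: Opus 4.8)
The plan is to prove the equivalent statement that an $[n,k]_q$ FWS code exists for every admissible $n$ with $n<2^k$. Since any $[n,k]_q$ code satisfies $L(n,k,q)\le n$, and since a code of dimension $k$ forces $n\ge k$, it suffices to exhibit, for each $k\le n\le 2^k-1$, a projective system realizing the FWS condition. By Lemma~\ref{lem:GeometricConditions} this reduces to constructing a multiset $\M$ of $n$ points of $\Pi=PG(k-1,q)$, not all in a hyperplane, whose hyperplane characters take \emph{every} value in $\{0,1,\ldots,n-1\}$; equality of the character set with $\{0,\ldots,n-1\}$ is automatic once every value is hit, because $\ch_\M(H)\le n-1$ for any hyperplane $H$ (the points are not all in a hyperplane).

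The construction I would use concentrates $\M$ on a single frame. Fix a basis $B=\{P_1,\dots,P_k\}$ of $\Pi$ in general position, let $\M$ be supported exactly on $B$, and assign multiplicities $m_1,\dots,m_k\ge 1$ with $\sum_i m_i=n$. For any hyperplane $H$ one then has $\ch_\M(H)=\sum_{P_i\in H}m_i=\sum_{i\in S}m_i$, where $S=\{\,i:P_i\in H\,\}\subsetneq B$. The observation recorded in Section~2 — that through any proper nonempty subset $S$ of a basis there is a hyperplane meeting $B$ in exactly $S$ — shows that every such proper subset sum is actually attained as a hyperplane character; I would also note separately that the character value $0$ (needed for a weight-$n$ codeword) is realized, since the hyperplane $x_1+\cdots+x_k=0$ misses every coordinate point of $B$. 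Because $\M$ is supported on the spanning set $B$, no hyperplane contains all of $\M$, so $\M$ is a genuine projective $[n,k,d]_q$-system.

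Thus everything comes down to choosing the multiplicities so that the subset sums of $(m_1,\dots,m_k)$ cover $\{0,1,\dots,n\}$ — that is, so that $(m_1,\dots,m_k)$ is a \emph{complete sequence}. Any value $v\in\{0,\dots,n-1\}$ is then a sum over a subset $S$, and since $v<n=\sum_i m_i$ that subset is necessarily proper, so $v$ occurs as a hyperplane character; this yields the FWS code. I would establish the needed combinatorial fact by induction on $k$: for every $k\le n\le 2^k-1$ there is a length-$k$ complete sequence of positive integers with sum $n$. In the step, peel off the top multiplicity: when $n\le 2^{k-1}$ set $m_k=1$ and apply the inductive hypothesis to $(k-1,n-1)$; when $2^{k-1}<n\le 2^k-1$ take the maximal sequence $1,2,\dots,2^{k-2}$ (subset sums $\{0,\dots,2^{k-1}-1\}$) on the first $k-1$ points and set $m_k=n-(2^{k-1}-1)\in[2,2^{k-1}]$. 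In either case the subset sums split as the lower sums together with their $m_k$-shift, and one checks the two blocks abut to form the full interval $\{0,\dots,n\}$.

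The main obstacle is this completeness lemma together with its sharp boundary: the upper limit $n\le 2^k-1$ (equivalently $n<2^k$) is exactly the threshold at which a length-$k$ complete sequence can exist, since covering $\{0,\dots,n\}$ requires $n+1\le 2^k$ subsets. A secondary point requiring care is that the quoted remark is stated only for \emph{nonempty} subsets, so the character value $0$ must be supplied by the explicit disjoint hyperplane above rather than by the remark. The geometric half of the argument is otherwise a direct application of Lemma~\ref{lem:GeometricConditions}, so I expect the bulk of the write-up to be verifying the interval-covering claim in the inductive step.
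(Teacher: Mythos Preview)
Your proposal is correct and uses essentially the same idea as the paper: support $\M$ on points in general position with binary-weighted multiplicities $1,2,4,\ldots,2^{t-1}$ plus a remainder, so that hyperplane characters are precisely the proper subset sums and hence cover $\{0,\ldots,n-1\}$. The paper does this directly with $t+1=\lfloor\log_2 n\rfloor+1$ points and no induction, whereas you wrap the same construction in an induction on $k$ that pads with extra multiplicity-$1$ points so as to use all $k$ basis points; the only substantive effect of your variant is that it is more explicit about the projective system spanning $PG(k-1,q)$ (and about realizing the character $0$), points the paper leaves implicit.
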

\begin{proof}
	Fix $ k $ points $ P_1,P_2,\ldots,P_{k} $ in general position (i.e. a basis) in  $ PG(k-1,q) $. Choose $ t $ maximal such that $2^{t}\le n$. Note that by assumption, $ t\le k-1 $. Define the projective multiset $ \M $ by assigning multiplicities according to  $m(P_i)=2^{i-1} $, $ 1\le i \le t $, and  $ m(P_{t+1})=n-(2^t-1)$. Every character from $ 0 $ to $ 2^{t}-1 $ is realized by at least one hyperplane meeting the first $ t $ points of the basis in at most $ t $ points. Further, every character from  $ n-(2^{t}-1) \;\; (\le 2^t) $ to $ n-1$ is realized by at least one hyperplane meeting $ P_{t+1} $ and at most $ t-1 $ of $ P_1,P_2,\ldots,P_t $. The result follows from Lemma \ref{lem:GeometricConditions}.  
	
\end{proof}

\begin{cor}
	$ L(n,k,2)=\left\{\begin{array}{ll}
	n & \textrm{, if }   n\le 2^{k}-1\\
	2^k-1 & \textrm{, otherwise.} 
	\end{array}\right. $
\end{cor}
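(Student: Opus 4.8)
The plan is to split the formula at its two branches, according to whether the length $n$ lies below or at/above the threshold $2^k$, working throughout under the standing hypothesis $k\ge 3$ inherited from Lemma \ref{lem:FWS if short} (the case $k=2$ being already settled by Lemma \ref{lem: dimension 2}).

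For the short regime $n\le 2^k-1$ — equivalently $n<2^k$, since $n$ is an integer — there is nothing new to establish: Lemma \ref{lem:FWS if short} already gives $L(n,k,2)=n$, which is precisely the top line of the claimed formula. I would simply invoke it.

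For the long regime $n\ge 2^k$ I would produce matching upper and lower bounds. The upper bound is immediate by specializing part (4) of Proposition \ref{prop: Si summary} to $q=2$, namely $L(n,k,2)\le L(k,2)=\frac{2^k-1}{2-1}=2^k-1$. For the lower bound, the point is that no fresh construction is needed: evaluating the short regime at the threshold length gives $L(2^k-1,k,2)=2^k-1$, and then the monotonicity of $L$ in the length, part (1) of Proposition \ref{prop: Si summary}, yields $L(n,k,2)\ge L(2^k-1,k,2)=2^k-1$ for every $n\ge 2^k-1$. Combining the two bounds forces $L(n,k,2)=2^k-1$ throughout this regime.

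The argument is essentially bookkeeping, so I do not expect a genuine obstacle. The one conceptual step worth flagging is that the two branches are glued at $n=2^k-1$, where both lines agree (the saturated value equals $n$ there), and that monotonicity together with the tight upper bound pins the value for all larger $n$. Recognizing this means one need not exhibit a separate optimal code for each length $n>2^k-1$; that observation is the only thing requiring thought before the two-line verification is written out.
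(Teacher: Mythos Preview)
Your proposal is correct and follows exactly the paper's own argument: the first branch is Lemma \ref{lem:FWS if short}, and the second branch is obtained by combining parts (1) and (4) of Proposition \ref{prop: Si summary}, with the monotonicity in part (1) supplying the lower bound from the threshold value $L(2^k-1,k,2)=2^k-1$. Your explicit remark that Lemma \ref{lem:FWS if short} presupposes $k\ge 3$ (with $k=2$ handled separately) is a useful clarification that the paper's two-line proof leaves implicit.
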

\begin{proof}
	The first part follows from Lemma \ref{lem:FWS if short}. The second part follows from parts (1) and (4) of Proposition \ref{prop: Si summary}.
\end{proof}

\begin{lem} \label{lem: FWS no fat points}
	If $ C $ in an $ [n,k]_q $ FWS code, $ k\ge 3 $ with projective system $ \M $ in $ \Pi=PG(k-1,q) $, then each point $ P\in \Pi $ satisfies $ \ch(P)\le \left\lceil \frac{n}{2}\right\rceil $
\end{lem}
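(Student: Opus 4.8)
The plan is to translate the statement into the geometry provided by Lemma~\ref{lem:GeometricConditions}. Since $C$ is FWS, every value in $\{0,1,\ldots,n-1\}$ must occur as the character $\ch(H)$ of some hyperplane $H$ of $\Pi$. I would fix an arbitrary point $P\in\Pi$, write $m=\ch(P)$ for its multiplicity in $\M$, and show that a large value of $m$ forces a ``gap'' in the set of attainable hyperplane characters, contradicting the FWS property.

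First I would split the hyperplanes of $\Pi$ according to whether they pass through $P$. If $P\in H$, then $H$ carries all $m$ copies of $P$, so $\ch(H)\ge m$; if $P\notin H$, then those $m$ copies are excluded, so $\ch(H)\le \ch(\Pi)-m=n-m$. Consequently every hyperplane has character in $\{0,1,\ldots,n-m\}\cup\{m,m+1,\ldots,n\}$, and no hyperplane can attain a value strictly between $n-m$ and $m$.

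Next I would compare this with the FWS requirement. Suppose, for contradiction, that $m>\tfrac{n+1}{2}$. Then $2m\ge n+2$, so $n-m+1\le m-1$, which means the integer $n-m+1$ lies strictly between $n-m$ and $m$; moreover, since $2\le m\le n$, we have $0\le n-m+1\le n-1$. Thus $n-m+1$ is a value in $\{0,\ldots,n-1\}$ realized by no hyperplane, contradicting Lemma~\ref{lem:GeometricConditions}. Hence $m\le \tfrac{n+1}{2}$, and in particular $\ch(P)=m\le \left\lceil\tfrac{n+1}{2}\right\rceil$; as $P$ was arbitrary, the claim follows. In fact this argument delivers the slightly stronger bound $\ch(P)\le \left\lfloor\tfrac{n+1}{2}\right\rfloor$.

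The argument is short, and the only point that needs care is the bookkeeping at the boundary: one must check that the forbidden character $n-m+1$ genuinely lies in the required range $\{0,\ldots,n-1\}$ (which uses $2\le m\le n$), so that its absence really violates the full weight spectrum condition rather than concerning a weight outside the admissible range $\{1,\ldots,n\}$. The degenerate possibilities $m\le 1$ make the stated inequality trivial and require no separate treatment, and the hypothesis $k\ge 3$ is used only insofar as it places us in the regime where the two-dimensional case has already been disposed of.
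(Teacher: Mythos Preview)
Your proof is correct and follows essentially the same line as the paper: split hyperplanes according to whether they pass through $P$, observe that their characters lie in $\{0,\ldots,n-m\}\cup\{m,\ldots,n\}$, and conclude that a large $m$ forces a gap contradicting Lemma~\ref{lem:GeometricConditions}. The only cosmetic difference is the choice of witness for the missing value: the paper exhibits the specific gap at $\lceil (n+1)/2\rceil$, whereas you exhibit it at $n-m+1$; your observation that this in fact yields the sharper bound $\ch(P)\le \lfloor (n+1)/2\rfloor$ is correct.
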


\begin{proof}
	Suppose by way of contradiction that there exists a point $ P\in \Pi $ with $ \ch(P)=t>\left\lceil \frac{n}{2}\right\rceil $. Any hyperplane $ H$ in $\Pi $ either meets $ P $, or does not. In the first case  $ \ch{(H)}\ge t >\left\lceil \frac{n}{2}\right\rceil $.  In the second case  	$ \ch{(H)}\le n-t < n-\left\lceil \frac{n}{2}\right\rceil \le\left\lceil \frac{n}{2}\right\rceil $. Consequently, there is no hyperplane with character $ \left\lceil \frac{n+1}{2}\right\rceil $. Since $ k\ge 3 $, $ C $ cannot be FWS (Lemma \ref{lem:GeometricConditions}).
\end{proof}

\begin{lem}\label{lem:FWS has plump point}
	If $ C $ is an $ [n,k]_q $ FWS code, $ n\ge 2^k $, $ k\ge 3 $ with projective system $ \M $ in $ \Pi=PG(k-1,q) $, then there exists a point $ P $ which satisfies $ \ch(P)\ge n-2^{k-1}+1 $.
\end{lem}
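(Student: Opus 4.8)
The plan is to prove the statement by induction on the projective dimension, peeling off one hyperplane at a time from a cleverly chosen descending flag. Set $\M_0:=\M$ in $\Pi_0:=PG(k-1,q)$. Having reached a system $\M_i$ in $\Pi_i$, let $\Pi_{i+1}$ be a hyperplane of $\Pi_i$ of \emph{maximum} character and put $\M_{i+1}:=\M_i\cap\Pi_{i+1}$. The terminal flat $\Pi_{k-1}$ is a single point $P$, and $\ch_\M(P)$ equals $n$ minus the total multiplicity shed along the chain. For each level write $A_i$ for the set of hyperplane characters of $\M_i$, $N_i$ for its total multiplicity, $M_i:=\max A_i$, and $\delta_i:=N_i-M_i$ (the minimum distance of the corresponding code). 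Because $\Pi_{i+1}$ realises the maximum character, the multiplicity discarded at step $i$ is exactly $\delta_i$, so $N_{i+1}=N_i-\delta_i$ and $\ch_\M(P)=N_{k-1}=n-\sum_{i=0}^{k-2}\delta_i$. Since $C$ is FWS, Lemma \ref{lem:GeometricConditions} gives $A_0=\{0,1,\dots,n-1\}$, hence $\delta_0=1$; the whole proof thus reduces to showing $\sum_{i=0}^{k-2}\delta_i\le 2^{k-1}-1$, for which I will establish $\delta_i\le 2^i$.

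To control $\delta_i$ I carry a second invariant, the maximal gap $g_i$ between consecutive elements of $A_i$ inside $[0,M_i]$; note $0\in A_i$ at every level, since intersecting an empty hyperplane of $\Pi_i$ with $\Pi_{i+1}$ again produces an empty flat. The geometric engine is the following observation: any hyperplane $H\ne\Pi_{i+1}$ of $\Pi_i$ meets $\Pi_{i+1}$ in a hyperplane $\Sigma$ of $\Pi_{i+1}$, and since $H\setminus\Sigma$ lies off $\Pi_{i+1}$ it carries multiplicity at most $\delta_i$, whence $\ch(H)-\delta_i\le\ch(\Sigma)\le\ch(H)$. Running this over every character $c<M_i$ of $A_i$ shows that $A_{i+1}$ meets each interval $[c-\delta_i,c]$. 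Combined with the fact that $A_i$ has gaps at most $g_i$ and that $0,M_i\in A_i$, this yields $g_{i+1}\le g_i+\delta_i$; and choosing $c$ to be the second–largest element of $A_i$ (which exceeds $M_i-g_i$) forces $M_{i+1}\ge M_i-g_i-\delta_i$, i.e. $\delta_{i+1}=N_{i+1}-M_{i+1}\le g_i+\delta_i$.

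Starting from $g_0=\delta_0=1$, the simultaneous recursion $g_{i+1},\delta_{i+1}\le g_i+\delta_i$ gives $g_i,\delta_i\le 2^i$ by an immediate induction. Therefore $\sum_{i=0}^{k-2}\delta_i\le\sum_{i=0}^{k-2}2^i=2^{k-1}-1$, and the surviving point $P$ satisfies $\ch_\M(P)=n-\sum_{i=0}^{k-2}\delta_i\ge n-2^{k-1}+1$, as required. (When $n-2^{k-1}+1\le 1$ the assertion is trivial, so the argument only needs the flats to exist in the regime where the systems stay non-degenerate, which is guaranteed by maintaining $0,M_i\in A_i$.)

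The main obstacle is the \emph{joint} control of the pair $(\delta_i,g_i)$. Peeling off the heaviest hyperplane can simultaneously depress the largest attainable character and tear fresh holes into the character set, so neither quantity can be bounded in isolation; it is precisely the coupled estimate $g_{i+1},\delta_{i+1}\le g_i+\delta_i$ — together with the interval–descent property $\ch(H)-\delta_i\le\ch(H\cap\Pi_{i+1})\le\ch(H)$ that feeds it — that keeps the doubling in check and manufactures the constant $2^{k-1}$. Getting that descent inequality and the preservation of the gap bound exactly right is the delicate part; the existence of the flag itself is routine.
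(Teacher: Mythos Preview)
Your proof is correct, but it is considerably more elaborate than the paper's argument, and the two take genuinely different routes.

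The paper does not build a greedy flag or track any gap invariant. Instead it simply picks, using the FWS hypothesis directly, hyperplanes $\Lambda_1,\Lambda_2,\ldots,\Lambda_{k-1}$ with the \emph{specific} characters $n-1,\,n-2,\,n-4,\ldots,\,n-2^{k-2}$. A one–line inclusion–exclusion gives $\ch\!\big(\bigcap_{j\le i}\Lambda_j\big)\ge n-(2^{i}-1)$, and the clever point is that this very inequality forces the dimension to drop at every step: since $\ch(\Lambda_{i+1})=n-2^{i}<n-(2^{i}-1)\le\ch\!\big(\bigcap_{j\le i}\Lambda_j\big)$, the running intersection cannot be contained in $\Lambda_{i+1}$. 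Thus after $k-1$ steps one lands on a point with character $\ge n-2^{k-1}+1$. The whole argument is three or four lines.

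Your approach replaces that one clever choice of characters by a greedy descent along max–character hyperplanes, at the price of having to maintain the coupled invariants $(\delta_i,g_i)$ and prove the joint recursion $g_{i+1},\delta_{i+1}\le g_i+\delta_i$. The analysis you give (the interval–descent inequality $\ch(H)-\delta_i\le\ch(H\cap\Pi_{i+1})\le\ch(H)$, preservation of $0\in A_i$, and the gap estimate via consecutive elements of $A_i$) is sound, and the induction $g_i,\delta_i\le 2^i$ then delivers the same bound. What your method buys is robustness: it never appeals to the existence of hyperplanes with any \emph{prescribed} character beyond the initial step, so it would still give information starting from a weaker hypothesis (say, a bounded initial gap $g_0$). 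What the paper's method buys is brevity and transparency: the constant $2^{k-1}-1$ appears immediately as $1+2+4+\cdots+2^{k-2}$ without any auxiliary invariant.
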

\begin{proof}
	Since $ C $ is FWS, there exist hyperplanes $ H_1,H_2,\ldots,H_{k-1} $ with $ \ch(H_i)=n-2^{i-1} $, for each $ 1\le i \le k-1 $. Observe that $ \ch(\mathcal{M}\setminus H_1)=1 $, and $ \ch(\mathcal{M}\setminus H_2)=2 $, so $ \ch(H_1\cap H_2)\ge n-3 $. Inductively we obtain $ \displaystyle \ch(\cap_{i=1}^{m}H_i)\ge n-2^m+1 $, $ 1\le m \le k-1 $.\\ Moreover, since $ \ch(H_m)=n-2^{m-1}<\ch(\cap_{i=1}^{m-1}H_i) $, $ 2\le m \le k-1 $ we must have $$ \dim(\cap_{i=1}^{m}H_i)\le \dim(\cap_{i=1}^{m-1}H_i)-1. $$ Further, since each $ H_i $ is an hyperplane of $ \Pi $, and $ m\le k-1 $, we must also have  $ \dim(\cap_{i=1}^{m}H_i)\ge \dim(\cap_{i=1}^{m-1}H_i)-1 $. Therefore, there exists a point $ P $ such that
	\[
	\{P\}=\cap_{i=1}^{k-1}H_i \text{ and } \ch(P)\ge n-2^{k-1}+1.
	\]
\end{proof}

\begin{cor}
	There exists an $ [n,k]_q $ FWS code if and only if $ n<2^{k} $
\end{cor}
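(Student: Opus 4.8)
The plan is to prove the two implications separately for $k \ge 3$ (the two-dimensional situation having been disposed of above, where the clean bound $n < 2^k$ in fact fails for $q > 2$), using only the three structural lemmas just established together with the dictionary of Lemma \ref{lem:GeometricConditions}.

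The forward implication is immediate: if $n < 2^k$ then Lemma \ref{lem:FWS if short} hands us a projective multiset whose character function is onto $\{0,1,\ldots,n-1\}$, which is an FWS code by Lemma \ref{lem:GeometricConditions}. For the converse I would assume an $[n,k]_q$ FWS code $C$ with projective system $\M$ in $\Pi = PG(k-1,q)$ and show $n < 2^k$. Lemma \ref{lem:FWS has plump point} furnishes a point $P$ with $\ch(P) \ge n - 2^{k-1} + 1$, whereas Lemma \ref{lem: FWS no fat points} caps every point at $\ch(P) \le \lceil (n+1)/2\rceil$. Combining the two gives $n - 2^{k-1} + 1 \le \lceil (n+1)/2\rceil$; a one-line split on the parity of $n$ (for odd $n$ this already yields $n \le 2^k - 1$) then delivers $n \le 2^k$.

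The only genuine obstacle is the single boundary value $n = 2^k$, which the two multiplicity bounds do not rule out by themselves: there they would merely force $\ch(P) = 2^{k-1} + 1$ exactly. I would eliminate it by hand, reusing the argument behind Lemma \ref{lem: FWS no fat points}. With $\ch(P) = 2^{k-1}+1$, every hyperplane through $P$ has character at least $2^{k-1} + 1$, while every hyperplane avoiding $P$ has character at most $n - \ch(P) = 2^{k-1} - 1$; hence the value $2^{k-1}$, which lies in $\{0,1,\ldots,n-1\}$, is attained by no hyperplane. This contradicts the surjectivity of $\ch$ demanded by Lemma \ref{lem:GeometricConditions}, so $n = 2^k$ is impossible and we are left with $n < 2^k$. (One could instead sharpen Lemma \ref{lem: FWS no fat points} to $\ch(P) \le \lceil n/2\rceil$ by the same gap reasoning, which would absorb this boundary case into the main inequality and remove the need for a separate argument.)
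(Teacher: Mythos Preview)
Your argument is correct and follows the same route as the paper: sufficiency via Lemma~\ref{lem:FWS if short}, necessity via the tension between Lemmas~\ref{lem:FWS has plump point} and~\ref{lem: FWS no fat points}. The paper simply asserts the strict inequality $n-(2^{k-1}-1)>\lceil (n+1)/2\rceil$ whenever $n\ge 2^{k}$ (there is also a typo, $2^{k+1}$ for $2^{k}$), but as you correctly observe this fails at the single value $n=2^{k}$, where both sides equal $2^{k-1}+1$. Your explicit treatment of that boundary case---showing that a point of multiplicity exactly $2^{k-1}+1$ forces the character $2^{k-1}$ to be unattained---is a genuine and necessary repair, and your parenthetical remark that Lemma~\ref{lem: FWS no fat points} could be sharpened to $\ch(P)\le\lceil n/2\rceil$ by the same gap argument is the cleaner fix.
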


\begin{proof}
For $ k=2 $ the result follows from Lemma \ref{lem:FWS 2D}, so we consider $ k\ge 3 $. Sufficiency follows from Lemma \ref{lem:FWS if short}.  For necessity,  suppose $ C $ is FWS with $ n\ge 2^{k} $. From Lemma \ref{lem:FWS has plump point}, there exists some point with $ \ch(P)\ge n-(2^{k-1}-1)>  \left\lceil \frac{n}{2}\right\rceil$ (since $ 2^{k-1}-1<\frac{n}{2} $). Lemma \ref{lem: FWS no fat points} gives a contradiction.
\end{proof}

\section{\bf Arcs and $ L(n,k,q) $}
\vskip 0.4 true cm

By assigning multiplicities to the points of an $ m $-arc in $ PG(k-1,q) $ we may construct codes having weight sets with a pleasant  combinatorial structure. For ease of presentation let us establish the following notation. For a given code length $ n$, let $ m=\lfloor \log_2(n)\rfloor $. Let $ r=n-2^m+1 $, and let  $ S_k=\{r+\alpha \mid 0\le \alpha <2^m, w(\alpha)\le k-2\} $. Here $ w(\alpha) $ is the weight of the binary representation of $ \alpha $.  

\begin{lem}\label{lem: bound 1 on L(n,k,q)}
	If  $ n < 2^{q+1} $, and $ k\ge 3 $ then $ L(n,k,q)\ge \sum\limits_{i=0}^{k-1} {m\choose i} +s $, where $  s=|\{a\in S_k\mid   w(a)\ge k \}| $. 
\end{lem}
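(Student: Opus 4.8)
The plan is to construct, for the non-trivial range of $n$, an explicit projective system supported on an arc and to count the distinct characters its hyperplanes realise; by the hyperplane--character description of weights (cf.\ Lemma \ref{lem:GeometricConditions}), where a codeword of weight $s$ corresponds to a hyperplane of character $n-s$, the number of distinct nonzero weights of the resulting code equals the number of distinct hyperplane characters, which I aim to show is at least $\sum_{i=0}^{k-1}\binom{m}{i}+s$.

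First I would dispose of the range $m\le k-2$: there $n<2^{k-1}<2^{k}$, so Lemma \ref{lem:FWS if short} already yields $L(n,k,q)=n$, and a short check shows $s=0$ and $\sum_{i=0}^{k-1}\binom{m}{i}=2^{m}\le n$ in this range, so the bound holds trivially. For the main case $m\ge k-1$ I note that $m\le q$ (this is exactly where $n<2^{q+1}$ is used), whence $q\ge m\ge k-1$ and $PG(k-1,q)$ contains a normal rational curve, a $(q+1)$-arc. I would fix $m+1\le q+1$ of its points $P_{0},\dots,P_{m}$ and set $m(P_{0})=r=n-2^{m}+1$ and $m(P_{j})=2^{\,j-1}$ for $1\le j\le m$, so that $\ch(\Pi)=n$; since $m+1\ge k$ the arc spans $\Pi$, so this is a non-degenerate $[n,k]_q$-system (and every character turns out to be strictly below $n$, so every realised weight is nonzero).

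The next step is to read off the characters. Every hyperplane $H$ meets the arc in a set $T$ of at most $k-1$ points (any $k$ arc points form a basis), and $\ch(H)=\sum_{P_{j}\in T}m(P_{j})$. Splitting on whether $P_{0}\in H$, the attainable characters are exactly $A:=\{v\in[0,2^{m}):w(v)\le k-1\}$ (from $H\not\ni P_{0}$, encoding $T\subseteq\{P_{1},\dots,P_{m}\}$ in binary) together with $S_k=\{r+\alpha:w(\alpha)\le k-2\}$ (from $H\ni P_{0}$). Then $|A|=\sum_{i=0}^{k-1}\binom{m}{i}$, and since every $a\in S_k$ with $w(a)\ge k$ has weight exceeding $k-1$ it cannot lie in $A$; hence $|A\cup S_k|\ge|A|+s$, which is the desired count.

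The crux --- and the step I expect to be the main obstacle --- is proving that each element of $A$ and of $S_k$ is genuinely attained, that is, that every subset $T$ of the arc with $|T|\le k-1$ equals $H\cap(\text{arc})$ for some hyperplane $H$. For $|T|=k-1$ this is immediate, since the $k-1$ points span a hyperplane that no further arc point can meet. For $1\le|T|<k-1$ the plan is a counting argument: among the $\theta_q(k-1-|T|)$ hyperplanes through $\langle T\rangle$, at most $(m+1-|T|)\,\theta_q(k-2-|T|)$ contain another arc point, and because $m+1-|T|\le q$ the difference is positive, leaving a hyperplane meeting the arc exactly in $T$. The genuinely delicate case is $T=\varnothing$ (character $0$, weight $n$), where this crude estimate fails precisely when the arc has the full $q+1$ points; here the plan is to use the normal rational curve explicitly, an external hyperplane corresponding to a degree-$(k-1)$ polynomial over $GF(q)$ with no root in $GF(q)$ and nonzero leading coefficient, which exists for every $k\ge3$ (take a product of irreducible quadratics and cubics of total degree $k-1$). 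Securing this external hyperplane, and thereby the character $0$, is the point on which the whole count rests.
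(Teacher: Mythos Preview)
Your construction and character count are exactly the paper's (powers of $2$ on $m$ arc points plus the remainder $r$ on one further point, then split on whether a hyperplane hits the remainder point); the paper simply asserts, without argument, that every subset of at most $k-1$ arc points occurs as an exact hyperplane section, while you supply the counting argument and the NRC-polynomial step for the empty section when $m=q$. Your separate treatment of the range $m\le k-2$ (where the $m{+}1$ chosen points do not span $PG(k-1,q)$, so the paper's construction does not literally produce a $k$-dimensional code) also patches a point the paper passes over, but none of this constitutes a different approach.
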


\begin{proof}
	Let $ \Pi=PG(k-1,q) $, and let $ K=\{P_0,P_1,\ldots,P_{m}\} $ be a set of points in general position. Note that since  $ n<2^{q+1} $, such a set of points exists. Construct the projective multiset $ \M $ by assigning multiplicities as follows: $ m(P_i)=2^i $, $ 0\le i \le m-1 $, and $ m(P_m)=r=n-2^m+1 $. Through each (possibly empty) subset of $ k-1$ or fewer points from $ P_0,P_1,\ldots,P_{m-1} $ there is at least one hyperplane meeting $ K $ in exactly that subset.    If $ H $ is one of these hyperplanes, then the binary representation  of $ \ch(H) $ is an $ m $-bit string with weight at most $ k-1 $. Clearly no two of these hyperplanes have the same character.  Now consider the hyperplanes through $ P_m $. The set of characters of these lines is precisely the set $ S_k $.  The result follows from the definition of $ s $.   
\end{proof}

In the case that $ q $ is even, the existence of hyperovals in $ PG(2,q) $ gives a slight extension to the previous lemma.

\begin{lem}
	If  $ 2^{q+1} \le n < 2^{q+2} $, then $ L(n,3,q)\ge {q+1\choose 2} +1 +s $. In particular, $ L(2^{q+2}-1,3,q)\ge {q+2\choose 2} $.
\end{lem}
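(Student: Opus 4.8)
The plan is to mimic the construction in Lemma \ref{lem: bound 1 on L(n,k,q)} but replace the arc of points in general position with a hyperoval in $PG(2,q)$, exploiting the extra point a hyperoval provides when $q$ is even. Since $2^{q+1}\le n<2^{q+2}$, we have $m=\lfloor\log_2(n)\rfloor=q+1$, so we need $m+1=q+2$ points on which to place our multiplicities. A set of $q+2$ points in general position does not exist in $PG(2,q)$ for arbitrary $q$, but when $q$ is even a hyperoval supplies exactly $q+2$ points, no three of which are collinear. This is precisely the geometric object that lets us push the dimension-$3$ bound one step further.

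First I would fix a hyperoval $\mathcal{O}=\{P_0,P_1,\ldots,P_{q+1}\}$ in $\Pi=PG(2,q)$ (with $q$ even) and build the projective multiset $\M$ by assigning $m(P_i)=2^i$ for $0\le i\le q$ and $m(P_{q+1})=r=n-2^{q+1}+1$, exactly paralleling the earlier construction with $m=q+1$. The key geometric fact is that in $PG(2,q)$ every line (hyperplane) meets the hyperoval in at most two points, since no three points of a hyperoval are collinear. I would then count the realizable characters in two families, just as before. For each subset of at most $2$ points drawn from $P_0,\ldots,P_{q}$, there is a line meeting $\mathcal{O}$ in exactly that subset, and the character of such a line is a $(q+1)$-bit string of weight at most $2$; these give $\binom{q+1}{0}+\binom{q+1}{1}+\binom{q+1}{2}=\binom{q+1}{2}+(q+1)+1$ distinct characters, all with binary weight $\le 2$. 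Separately, the lines through $P_{q+1}$ realize characters of the form $r+\alpha$ where $\alpha$ ranges over binary strings of weight $\le 1$ on the first $q+1$ coordinates, contributing the term counted by $s$.

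The arithmetic check I must perform is that these two families of characters overlap in a controlled way and that their union has the claimed cardinality $\binom{q+1}{2}+1+s$. The first family has $\binom{q+1}{2}+(q+1)+1$ elements, but the single character $0$ (the empty subset, a line missing $\mathcal{O}$ entirely, which exists since external lines to a hyperoval exist) and the low-weight characters must be reconciled against the statement's leading term $\binom{q+1}{2}+1$; I expect the accounting to match after identifying which of the weight-$\le 2$ characters are already subsumed or double-counted against the $S_k$-characters through $P_{q+1}$, mirroring the role of $s$ in Lemma \ref{lem: bound 1 on L(n,k,q)}. For the final assertion, setting $n=2^{q+2}-1$ forces $r=2^{q+1}$ and makes every character from $0$ to $n-1$ of binary weight at most $2$ realizable, so the weight set attains $\binom{q+2}{2}=\theta_q(2)$, the MWS bound, showing $L(2^{q+2}-1,3,q)\ge\binom{q+2}{2}$.

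The main obstacle will be the precise bookkeeping of characters: unlike the general-position arc of Lemma \ref{lem: bound 1 on L(n,k,q)}, where the incidence structure of hyperplanes with the arc is clean, here I must verify that a hyperoval in $PG(2,q)$ genuinely admits, for each prescribed subset of size $0,1,2$, a line meeting $\mathcal{O}$ in exactly that subset (this uses that every pair of hyperoval points lies on a unique secant, every point lies on exactly one tangent-free configuration, and external lines exist), and then confirm the overlap between the two character families is exactly what the formula $\binom{q+1}{2}+1+s$ anticipates. Establishing that no two of the enumerated hyperplanes share a character, and that the special structure of $q$ even (tangent lines to a hyperoval do not exist, every line is secant or external) gives the clean count, is where the care is needed.
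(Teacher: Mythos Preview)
Your approach---place a hyperoval in $PG(2,q)$ and assign binary-power multiplicities---is exactly the paper's. The paper's entire proof is one sentence: mimic Lemma~\ref{lem: bound 1 on L(n,k,q)}, noting that every line meets a hyperoval in either $0$ or $2$ points.

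However, your bookkeeping contains a genuine error. You claim that for each subset of size at most $2$ of $\{P_0,\ldots,P_q\}$ there is a line meeting $\mathcal{O}$ in exactly that subset, and you count $\binom{q+1}{0}+\binom{q+1}{1}+\binom{q+1}{2}$ characters in the first family. But singletons are \emph{never} realized: a hyperoval has no tangent lines, as you yourself note at the end. A line not through $P_{q+1}$ is either external (character $0$) or a secant through two of $P_0,\ldots,P_q$ (character $2^i+2^j$). Thus the first family contributes exactly $1+\binom{q+1}{2}$ distinct characters, and this \emph{is} the leading term of the bound---no mysterious reconciliation or double-counting correction is needed. Similarly, every line through $P_{q+1}$ is a secant meeting exactly one further $P_i$, so the second family yields the characters $r+2^i$, $0\le i\le q$; the value $r$ itself is \emph{not} achieved. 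Once you use the $0$-or-$2$ intersection property from the start, the count is immediate.

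One further slip: your claim that $\binom{q+2}{2}=\theta_q(2)$ is false (e.g.\ $q=2$ gives $6\neq 7$). The ``in particular'' clause is not asserting an MWS code; it just records the lower bound obtained when $r=2^{q+1}$, so that all $q+2$ multiplicities are distinct powers of $2$ and the secants alone give $\binom{q+2}{2}$ distinct characters.
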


\begin{proof}
	The proof is entirely similar to the previous lemma, though we must take into account that each line in $ PG(2,q) $ meets an hyperoval in either $ 0 $, or $ 2 $ points. 
\end{proof}

There is much interest in determining the size of the largest arc in $ PG(k,q) $, $ k>2 $. According to the Main Conjecture on MDS Codes \cite{Ball2012b,Ball2012}, taking $ q>k $, such arcs have size bounded above by $ q+1 $ unless $ k=3 $ or $ k=q-1 $ and $ q $ is even, in which case $ n\le q+ 2 $. Employing a construction as in the previous two Lemmata we arrive at the following.

\begin{lem}
	If there exists an $ (q+2) $-arc in $ PG(k,q) $, $ k\ge 3 $ then there exists an $ [2^m-1,k]_q $ code where $|w(C)|=\sum\limits_{i=0}^{k-1} {q+2 \choose i} $
\end{lem}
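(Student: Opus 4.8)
The plan is to mimic the construction used in Lemma \ref{lem: bound 1 on L(n,k,q)} and its hyperoval analogue, replacing the set of points in general position by a $(q+2)$-arc. Set $n=2^m-1$, so that $r=n-2^m+1=0$ and the ``top point'' $P_m$ contributes nothing special; in fact with $r=0$ the multiplicities simplify to $m(P_i)=2^i$ for the arc points we actually use. First I would fix a $(q+2)$-arc $K=\{Q_0,Q_1,\ldots,Q_{q+1}\}$ in $\Pi=PG(k,q)$ (note the ambient dimension here is $k$, so the projective system lives in $PG(k,q)$ and the associated code has dimension $k+1$; I would first double-check whether the intended statement is a code of dimension $k$ in $PG(k-1,q)$ or dimension $k+1$ in $PG(k,q)$, since the lemma as written says $[2^m-1,k]_q$ but places the arc in $PG(k,q)$). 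Taking the arc to have $q+2$ points and assigning multiplicity $m(Q_i)=2^i$ to exactly the first $m=q+2$ of them (so $m=q+2$ and $n=2^{q+2}-1$) makes the bookkeeping consistent.

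Next I would analyze which characters arise from hyperplanes. The defining property of a $(q+2)$-arc in $PG(k,q)$ is that every $k$ of its points are linearly independent, so any hyperplane meets $K$ in at most $k$ points (a subset of size $\le k$ is independent and hence spans a flat of dimension $\le k-1$). Conversely, through any subset $S\subseteq K$ with $|S|\le k$ there is at least one hyperplane meeting $K$ exactly in $S$ — this is the ``proper nonempty subset of a basis'' remark from the preliminaries, extended to the arc setting by independence. For each such hyperplane $H$ the character $\ch(H)=\sum_{Q_i\in S}2^i$ is precisely the integer whose binary representation is the indicator string of $S$, an $m$-bit number of binary weight $|S|\le k$. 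Distinct subsets give distinct characters, so the number of realized characters is the number of subsets of $K$ of size at most $k$ among the $m=q+2$ points, namely $\sum_{i=0}^{k}\binom{q+2}{i}$.

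I would then reconcile this count with the index ranges in the statement. The lemma claims $\sum_{i=0}^{k-1}\binom{q+2}{i}$, which is the count of subsets of size at most $k-1$; the discrepancy of one index is exactly the point I would want to pin down, and it hinges on whether ``hyperplane'' means a flat of projective dimension $k-1$ (meeting the arc in $\le k$ points, giving subsets up to size $k$) or $k-2$. The cleanest route is: an $[n,k+1]_q$ code corresponds to a projective system in $PG(k,q)$ whose hyperplanes are $(k-1)$-flats, each meeting the arc in at most $k$ points, yielding $\sum_{i=0}^{k}\binom{q+2}{i}$ weights — so I would present the construction and then state the weight-set cardinality in whichever normalization matches the paper's convention, flagging the off-by-one as the sole delicate accounting step.

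The main obstacle I anticipate is not the geometry but this indexing and dimension alignment: confirming that a $(q+2)$-arc in $PG(k,q)$ has the stated maximal hyperplane intersection, that every admissible subset is actually attained by some hyperplane (existence, not just the upper bound), and that the binary-weight correspondence between subsets and characters is a genuine bijection onto the realized weight set. Once independence of every $k$-subset of the arc is invoked, the upper bound on intersection sizes and the injectivity of $S\mapsto\sum_{Q_i\in S}2^i$ are immediate; the existence of a hyperplane realizing each prescribed small intersection, and the precise bound on $|w(C)|$, are what require care, and everything else is the same weight-via-character translation already established in Lemma \ref{lem:GeometricConditions}.
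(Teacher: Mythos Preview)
Your approach is exactly what the paper intends: it offers no separate proof for this lemma, only the sentence ``Employing a construction as in the previous two Lemmata we arrive at the following,'' and your construction (assign multiplicity $2^i$ to the $i$-th arc point and read off hyperplane characters as integers of bounded binary weight) is precisely that construction. Your observation about the dimension/index mismatch is well taken --- the statement as written has an off-by-one between ``arc in $PG(k,q)$'' and ``$[2^m-1,k]_q$ code with $\sum_{i=0}^{k-1}\binom{q+2}{i}$ weights'' that the paper itself does not address.
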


\begin{rem}
	For fixed values of $ k $, and $ q $, the nature of the construction in the previous lemma makes it a relatively simple matter to enumerate codewords of each weight. In view of the main conjecture on MDS codes, it may therefore be of interest in future work to establish the non-existence of these codes.
\end{rem}    

\section{\bf Conclusions and open problems}
\vskip 0.4 true cm

In this brief communication we have determined necessary and sufficient conditions for the existence of full weight spectrum codes, \textit{i.e.} codes  achieving $ L(n,k,q)=n $.  Our results relate to the more general problem of determining $ L(n,k,q) $. Taking into account the result of \cite{Alderson2018}, and Proposition \ref{prop: better bound MWS}, we have 
\[
L(n,k,q)=\left\{ \begin{array}{cl}
n & \textrm{ iff } n< 2^k \\
\theta_q(k-1) & \textrm{ if } n\ge \min \{q^{\frac{k^2+k-4}{2}}, \frac{q(q^k-1)(q^{k-1}-1)^2}{2(q-1)^3} \}
\end{array}\right. .
\]
It may therefore be of interest to establish values or general bounds for  $ L(n,k,q)$ where $ 2^k-1\le n < \min \{q^{\frac{k^2+k-4}{2}}, \frac{q(q^k-1)(q^{k-1}-1)^2}{2(q-1)^3} \} $.\\

\vskip 0.4 true cm

\section*{\bf Acknowledgement}
%\begin{acknowledgements}
The author acknowledges the support of the Natural Sciences and Engineering Research Council of Canada (NSERC).
%\end{acknowledgements}

\vskip 0.4 true cm

%\begin{center}{\textbf{Acknowledgements}}
%\end{center}
%The author wish to thank $\cdots$ \\ \\
%\vskip 0.4 true cm

\bibliographystyle{plain}
\bibliography{AldersonFWS}

\begin{thebibliography}{10}

\bibitem{Alderson2018}
Tim~L. Alderson and Alessandro Neri.
\newblock Maximum weight spectrum codes.
\newblock {\em Advances in Mathematics of Communications}, 2018.
\newblock accepted.

\bibitem{Ball2012b}
Simeon Ball.
\newblock On sets of vectors of a finite vector space in which every subset of
  basis size is a basis.
\newblock {\em Journal of the European Mathematical Society (JEMS)},
  14(3):733--748, 2012.

\bibitem{Ball2015}
Simeon Ball.
\newblock {\em Finite geometry and combinatorial applications}, volume~82.
\newblock Cambridge University Press, 2015.

\bibitem{Ball2012}
Simeon Ball and Jan De~Beule.
\newblock On sets of vectors of a finite vector space in which every subset of
  basis size is a basis {II}.
\newblock {\em Designs, Codes and Cryptography. An International Journal},
  65(1-2):5--14, 2012.

\bibitem{Delsarte1973}
Philippe Delsarte.
\newblock Four fundamental parameters of a code and their combinatorial
  significance.
\newblock {\em Information and Control}, 23(5):407--438, 1973.

\bibitem{Enomoto1987}
Hikoe Enomoto, Peter Frankl, Noboru Ito, and Kazumasa Nomura.
\newblock Codes with given distances.
\newblock {\em Graphs and Combinatorics}, 3(1):25--38, 1987.

\bibitem{Haily2015}
Abdelfattah Haily and Driss Harzalla.
\newblock On binary linear codes whose automorphism group is trivial.
\newblock {\em Journal of Discrete Mathematical Sciences and Cryptography},
  18(5):495--512, 2015.

\bibitem{MacWilliams1963}
Jessie MacWilliams.
\newblock A theorem on the distribution of weights in a systematic code.
\newblock {\em The Bell System Technical Journal}, 42(1):79--94, 1963.

\bibitem{Shi2018a}
M.~{Shi} and P.~{Sol{\'e}}.
\newblock {The largest number of weights in cyclic codes}.
\newblock {\em ArXiv e-prints}, July 2018.

\bibitem{Shi2018}
Minjia Shi, Hongwei Zhu, Patrick Sol{\'e}, and G{\'e}rard~D Cohen.
\newblock How many weights can a linear code have?
\newblock {\em Des. Codes Cryptogr.}, 2018.

\bibitem{Slepian1956}
David Slepian.
\newblock A class of binary signaling alphabets.
\newblock {\em Bell Labs Technical Journal}, 35(1):203--234, 1956.

\bibitem{MR1186841}
M.~A. Tsfasman and S.~G. Vl{\u{a}}du{\c{t}}.
\newblock {\em Algebraic-geometric codes}, volume~58 of {\em Mathematics and
  its Applications (Soviet Series)}.
\newblock Kluwer Academic Publishers Group, Dordrecht, 1991.
\newblock Translated from the Russian by the authors.

\bibitem{MR0179666}
Oswald Veblen and John~Wesley Young.
\newblock {\em Projective geometry. {V}ol. 1}.
\newblock Blaisdell Publishing Co. Ginn and Co.\, New York-Toronto-London,
  1965.

\end{thebibliography}
% Non-BibTeX users please use
%%------------------------------------------------------------------------------------%
%\begin{thebibliography}{20}
%\bibitem{E}
%Gy. Elekes,  A note on a problem of Erd\"os on right angles, {\em Discrete Math.}, {\bf 309} no. 16 (2009) 5253--5254.
%
%
%\bibitem{BEA}
%A. Ballester-Bolinches, R. Esteban-Romero and M. Asaad,  {\em Products of finite groups},
%de Gruyter Expositions in Mathematics, {\bf 53} Walter de Gruyter GmbH \& Co. KG, Berlin, 2010.
%
%\bibitem{G}
%G. Glauberman, A $p$-group with no normal large abelian subgroup,
% {\em Character theory of finite groups}, 61–65, Contemp. Math., 524, Amer. Math. Soc., Providence, RI, 2010.
%
%\end{thebibliography}
%
%%-----------------------------------------------------------------------------
%-----------------------------------------------------------------------------

\bigskip
\bigskip

{\footnotesize \pn{\bf Tim L. Alderson}\; \\ {Department of
Mathematics and Statistics}, {University
of New Brunswick Saint John, 100 Tucker Park Rd, P.O.Box 5050,} {Saint John, New Brunswick, Canada}\\
{\tt Email: Tim@unb.ca}\\

\end{document}